\documentclass[11pt]{article}

\usepackage[margin=1in]{geometry}
\usepackage{amsmath,amssymb,amsthm,mathtools,bm}
\usepackage{graphicx}
\graphicspath{{./}}
\usepackage{booktabs}
\usepackage[numbers,sort&compress]{natbib}
\usepackage[hidelinks,breaklinks=true]{hyperref}
\usepackage[capitalize,noabbrev]{cleveref}
\usepackage{enumitem}

\newtheorem{theorem}{Theorem}
\newtheorem{proposition}[theorem]{Proposition}
\newtheorem{corollary}[theorem]{Corollary}
\theoremstyle{definition}
\newtheorem{definition}[theorem]{Definition}
\theoremstyle{remark}
\newtheorem{remark}[theorem]{Remark}

\newcommand{\calK}{\mathcal{K}}
\newcommand{\calT}{\mathcal{T}}
\newcommand{\calC}{\mathcal{C}}
\newcommand{\EE}{\mathbb{E}}
\newcommand{\bagg}{\beta_{\mathrm{agg}}}
\newcommand{\bflow}{\beta_{\mathrm{flow}}}

\title{Privacy-Preserving Intent Fulfilment and Assurance for 6G RAN}

\author{Joss Armstrong\\
  Ericsson, Athlone, Ireland\\
  \texttt{joss.armstrong@ericsson.com}
}

\date{April 2026}

\begin{document}

\maketitle

\begin{abstract}
Intent-based network management is the emerging paradigm for 6G service
lifecycle automation, with the 3GPP intent management framework
(TS~28.312) defining creation, translation, fulfilment, and assurance
stages. Existing fulfilment and assurance approaches require deep packet
inspection, per-flow state tracking, or access to vendor-internal node
telemetry to verify that provisioned resources satisfy expressed intents.
These requirements conflict with regulatory constraints (GDPR, ePrivacy
Directive) in multi-tenant networks and with vendor opacity in
multi-vendor O-RAN deployments.

We present an architecture for privacy-preserving intent fulfilment and
assurance in which a coordinator provisions resources from declared
intent categories without traffic inspection, and verifies fulfilment
using only aggregate standardised PM counters at the O1 interface. A
data-processing inequality argument shows that the resource allocation
reveals at most $\log_2 K$ bits about traffic content, where $K$ is the
number of intent categories. We define two architectural privacy
properties, intent-traffic unlinkability and node-opaque verification,
and show that both hold by construction. Node-opacity does not sacrifice
detection power: the aggregate verifier weakly dominates the per-agent
verifier under a homogeneity condition.

We map the architecture to the 3GPP intent lifecycle and the O-RAN
Non-RT RIC, identifying the concrete interfaces, data objects, and
deployment points at which the mechanism operates. On production PM
counter data from four operator networks, increasing intent-category
granularity sharpens provisioning but weakens assurance, consistent with
the theoretical prediction that the privacy ceiling is a structural side
effect of the detection constraint rather than a separate design
parameter.

\medskip\noindent\textbf{Keywords:} intent-based networking, privacy, 6G,
O-RAN, data processing inequality, aggregate verification, network
management.
\end{abstract}

% =======================================================================
\section{Introduction}
\label{sec:intro}
% =======================================================================

Intent-based networking (IBN) replaces manual, element-level
configuration with declarative service objectives. An operator expresses
an intent, for example ``ensure 99.9\% availability for enterprise
slice~X,'' and the management system translates, provisions, and assures
the corresponding resources autonomously. The 3GPP Service Assurance
(SA5) intent management framework, specified in TS~28.312
\cite{3gpp_28312}, defines a four-stage lifecycle: intent creation,
translation to network-level objectives, fulfilment through resource
provisioning, and continuous assurance through monitoring. The O-RAN
Alliance has adopted a compatible model in which a Non-RT RIC
orchestrates policy-driven management over the O1 and A1 interfaces
\cite{oran_wg1}.

Current approaches to intent fulfilment assume full observability of the
network state. The GenAI-IDN framework \cite{habib2025}
processes per-flow and per-session telemetry through large language models
to infer whether intents are satisfied. Survey literature on IBN for 6G
\cite{wang2026} identifies traffic classification, deep packet
inspection, and per-session monitoring as standard components of the
fulfilment and assurance pipeline. These assumptions create three
practical conflicts in production deployment.

First, deep packet inspection and per-flow monitoring conflict with data
protection regulations. The GDPR \cite{gdpr2016} requires data
minimisation, and the ePrivacy Directive \cite{eprivacy2002} restricts
interception of electronic communications content. An intent fulfilment
system that inspects traffic payloads or correlates per-user sessions
operates in tension with these requirements, even when the inspection
serves a legitimate network management purpose.

Second, multi-vendor O-RAN deployments create vendor opacity constraints.
The O-RAN architecture separates the RAN into components from different
vendors (CU, DU, RU), connected through open interfaces. Not all vendors
expose the same near-RT RIC or E2 telemetry. An intent assurance
mechanism that requires vendor-internal node state or proprietary
telemetry cannot operate consistently across a heterogeneous deployment.

Third, encrypted traffic, now exceeding 90\% of internet traffic
\cite{sharma2025}, renders content-based fulfilment increasingly
impractical. The encrypted traffic classification literature
\cite{sirinam2018,wang2014} treats this as a classification problem,
developing increasingly sophisticated inference attacks on metadata. An
architecture that does not observe traffic content in the first place
sidesteps this arms race entirely.

The MISES mechanism \cite{mises_preprint} addresses these conflicts by
restricting the coordinator's information flow. The coordinator observes
only a finite category signal $C \in \{1, \ldots, K\}$ for provisioning
and only aggregate PM counters for assurance. No per-flow payloads,
per-agent traces, or vendor-internal state are accessed. The MISES
framework established three formal results: a tight welfare bound relating
provisioning quality to category granularity (Theorem~1), aggregate
detection dominance over per-agent verification (Theorem~2), and a
feasibility band characterising the welfare-detection tradeoff
(Theorem~3). This paper develops the privacy consequences of that
architecture and maps them to the concrete deployment context of 6G
intent-based management.

The contributions are:
\begin{enumerate}[leftmargin=*,nosep]
\item A leakage bound showing that the intent fulfilment mechanism
  reveals at most $H(C) \leq \log_2 K$ bits about traffic content,
  derived from the data processing inequality on the architectural
  Markov chain (\cref{sec:leakage}).
\item Two architectural privacy definitions, intent-traffic
  unlinkability and node-opaque verification, that hold by construction
  (\cref{sec:privacy}).
\item A structural analysis showing that privacy and detection are
  aligned (both favour lower $K$), so that the detection constraint
  automatically caps leakage (\cref{sec:tension}).
\item A deployment mapping to the 3GPP intent lifecycle and the O-RAN
  Non-RT RIC, identifying the interfaces, data objects, and deployment
  points at which the architecture operates (\cref{sec:deployment}).
\item An empirical illustration on production PM data from four operator
  networks showing the dual-purpose signal tension in practice
  (\cref{sec:empirical}).
\end{enumerate}

% =======================================================================
\section{Background and Related Work}
\label{sec:related}
% =======================================================================

% -----------------------------------------------------------------------
\subsection{Intent-Based Network Management}
\label{sec:related:ibn}
% -----------------------------------------------------------------------

Intent-based networking emerged from the need to abstract network
management away from element-level configuration. The 3GPP intent
management framework (TS~28.312 \cite{3gpp_28312}, TR~28.912
\cite{3gpp_28912}) defines intents as declarative service objectives with
associated fulfilment and assurance stages. An intent object specifies
expectations (e.g.\ coverage area, throughput targets, availability)
and the management system autonomously translates these into network
configuration and continuously monitors fulfilment. The TM~Forum
autonomous networks initiative (IG1305 \cite{tmforum_ig1305}) and the
ETSI Zero-touch Service Management framework (ZSM \cite{etsi_zsm})
pursue compatible objectives.

Wang et al.\ \cite{wang2026} survey intent-driven end-to-end 6G system
designs and identify per-flow telemetry processing and traffic
classification as standard components. Leivadeas and Falkner
\cite{leivadeas2023} survey intent-based networking abstractions and note
the gap between declarative intent languages and the monitoring
capabilities needed for assurance. Habib et al.\ \cite{habib2025}
propose a GenAI-IDN (generative-AI intent-driven networking) architecture for 6G~RAN that processes
per-flow session data through a Mamba-based model for intent translation
and assurance. All of these approaches assume the management system has
access to granular per-flow or per-session telemetry.

% -----------------------------------------------------------------------
\subsection{Privacy in Network Management}
\label{sec:related:privacy}
% -----------------------------------------------------------------------

Differential privacy \cite{dwork2006,abadi2016} provides formal
guarantees for statistical queries over sensitive data by adding
calibrated noise. Federated learning \cite{mcmahan2017} keeps raw data
at local nodes while training shared models, though recent work shows
gradient leakage attacks can partially invert this protection
\cite{blika2024}. Secure multi-party computation \cite{yao1986} and
homomorphic encryption \cite{gentry2009} enable computation on encrypted
inputs. These approaches share a common structure: a full-information
pipeline exists, and a protection mechanism is applied after or during
data collection. The MISES architecture differs by restricting the
information flow so that the data requiring protection is never collected.

Traffic analysis and website fingerprinting \cite{sirinam2018,wang2014}
study what an adversary can infer from encrypted traffic metadata.
Unlinkability, the inability to link observed actions to identities, is a
standard privacy goal formalised by Pfitzmann and Hansen
\cite{pfitzmann2010}. In the MISES architecture, the category signal is
the only link between the agent and the coordinator's resource decision.
The resulting unlinkability bound is architectural rather than
computational, holding regardless of the adversary's processing power.

Privacy by design \cite{cavoukian2009} advocates building privacy into
system architecture rather than adding it as a compliance layer. The GDPR
data minimisation principle (Article~5(1)(c) \cite{gdpr2016}) requires
that only data necessary for a specified purpose be collected. The
architecture presented here instantiates both principles: the mechanism's
information flow is designed so that privacy-sensitive data is never
available to the coordinator.

% -----------------------------------------------------------------------
\subsection{Aggregate Monitoring and O-RAN}
\label{sec:related:oran}
% -----------------------------------------------------------------------

The 3GPP PM framework (TS~28.552 \cite{3gpp_28552}) defines standardised
performance measurement counters for 5G NR, including cell-level
throughput, PRB utilisation, HARQ statistics, latency metrics, and
active-UE counts. These counters are exposed at the O-RAN O1 interface
as bulk PM data files \cite{oran_wg1}. The O-RAN architecture separates
the RAN into open, multi-vendor components and defines the Non-RT RIC as
the management entity responsible for policy-based optimisation at
timescales above one second.

Polese et al.\ \cite{polese2023} provide a comprehensive survey of the
O-RAN architecture, noting that the O1 interface provides aggregate
cell-level performance data while the E2 interface provides near-real-time
per-UE data, but E2 exposure varies significantly across vendors. The
architecture presented here operates exclusively on O1 aggregate data,
making it deployable in any O-RAN configuration regardless of E2
availability.

% =======================================================================
\section{System Model}
\label{sec:model}
% =======================================================================

A coordinator $\calC$ manages a shared resource pool for a population
of $M$ agents. Each agent has a demand type $t \in \calT$ and generates
traffic with content variable $T$. The coordinator does not observe $t$
or $T$.

Each agent declares a category $c \in \calK = \{1, \ldots, K\}$. In the
intent management context, categories correspond to intent classes,
groups of intents with similar resource requirements derived from the
intent translation stage. The coordinator provisions resources
$R = \phi(c)$ and verifies service fulfilment using aggregate PM counters
$\bm{X}_1, \ldots, \bm{X}_n$ for each category. No per-flow payloads,
per-agent traces, or internal node state are accessed.

The architecture enforces the Markov chain
\begin{equation}\label{eq:markov}
  T \;\longrightarrow\; C \;\longrightarrow\; R = \phi(C).
\end{equation}
Traffic content $T$ influences the category declaration, but the
coordinator observes only $C$ and acts only on $\phi(C)$. This Markov
structure is a design constraint enforced by the architecture, not an
analytical convenience.

\subsection{Welfare Model}

Let $W^*$ denote the welfare under a full-information benchmark where
the coordinator observes each agent's true demand type and provisions
accordingly. The category-based mechanism achieves welfare $W(K)$ with
gap $\Delta(K) = W^* - W(K) \geq 0$. Armstrong \cite{mises_preprint}
shows that $\Delta(K)$ is tightly controlled by the within-category
demand variance $\varepsilon(K)$, which decreases monotonically in $K$:
finer categories yield better provisioning.

\subsection{Detection Model}

The coordinator tests service fulfilment for each category using
aggregate PM counters. Under the null hypothesis (fulfilled service),
$\bm{X}_i \sim P_k$ with mean $\bm{\mu}_k$. Under the alternative
(degraded service), $\EE[\bm{X}_i] = \bm{\mu}_k - \bm{\delta}_k$,
where $\bm{\delta}_k$ is a fixed degradation vector. The test statistic
is the sample mean
$\bar{\bm{X}} = (1/n)\sum_{i=1}^{n} \bm{X}_i$.

We write $\bagg(\alpha, \delta_k, n)$ for the detection power of the
aggregate test at significance level $\alpha$, and
$\bflow(\alpha, \delta_k, n, m)$ for the per-agent test that observes
individual agent metrics $\bar{y}_i = X_i + \bar{\xi}_i$, where
$\bar{\xi}_i$ is task-orthogonal noise.

The formal results in \cite{mises_preprint} hold for the \emph{MISES
mechanism class}: uniform category sizes ($M/K$ agents per category),
sample-mean aggregation as the test statistic, and Gaussian observation
noise. We adopt these assumptions throughout.

\subsection{Threat Model}

We consider an honest-but-curious coordinator that follows the protocol
but may attempt to infer information about $T$ from $R$ and the
aggregate counters. A malicious coordinator that deviates from the
protocol, for instance by deploying DPI outside the mechanism, is outside
the threat model. The definitions and bounds below characterise what the
mechanism reveals when followed as designed.

% =======================================================================
\section{Privacy Architecture}
\label{sec:privacy}
% =======================================================================

The MISES mechanism maps declared intent categories to resource
allocations without inspecting traffic content. This induces a structural
constraint on what the allocation can reveal.

% -----------------------------------------------------------------------
\subsection{Leakage Bound}
\label{sec:leakage}
% -----------------------------------------------------------------------

\begin{theorem}[Category-Limited Leakage Bound]\label{thm:dpi}
Assume the MISES architecture induces the Markov chain
$T \to C \to R$ as in~\eqref{eq:markov}. Then
\begin{equation}\label{eq:dpi}
  I(T;\, R) \leq I(T;\, C) \leq H(C).
\end{equation}
If the category prior is uniform over $K$ categories,
$I(T;\, R) \leq \log_2 K$.
\end{theorem}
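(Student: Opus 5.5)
The plan is to prove the two inequalities in \eqref{eq:dpi} separately and then specialise to the uniform prior. Throughout, I treat $C$ as a discrete random variable on $\calK$ and $T$ as a general random variable. All information quantities are in bits. The Markov property \eqref{eq:markov} means that $T$ and $R$ are conditionally independent given $C$. This holds because $R=\phi(C)$ is a deterministic function of $C$; more generally it holds whenever the coordinator's randomness is independent of $T$ given $C$, so the argument does not depend on $\phi$ being deterministic.

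\textbf{First inequality: $I(T;R)\le I(T;C)$.} This is the data processing inequality. I would derive it from the chain rule for mutual information, expanding $I(T;C,R)$ in two ways:
\begin{equation*}
  I(T;C,R) = I(T;C) + I(T;R\mid C) = I(T;R) + I(T;C\mid R).
\end{equation*}
Conditional independence of $T$ and $R$ given $C$ gives $I(T;R\mid C)=0$. Since $I(T;C\mid R)\ge 0$, it follows that $I(T;R)\le I(T;C)$.

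The one point needing care is that $T$ (traffic content) may be continuous or high-dimensional. To handle this, I would invoke the general definition of mutual information as a supremum over finite quantisations of $T$. The DPI and the chain rule hold for each quantisation, so they pass to the supremum. This is the only step I regard as a potential technical obstacle, and it is standard.

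\textbf{Second inequality: $I(T;C)\le H(C)$.} Since $C$ is discrete, I would write
\begin{equation*}
  I(T;C)=H(C)-H(C\mid T),
\end{equation*}
and use $H(C\mid T)\ge 0$. This requires only that $C$ has finite entropy, which holds because it takes at most $K$ values. Writing the identity with $C$ as the discrete argument avoids any differential entropy of $T$.

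\textbf{Uniform-prior specialisation.} For any distribution on $K$ points, $H(C)\le\log_2 K$, by Jensen's inequality or Gibbs' inequality against the uniform law, with equality iff $C$ is uniform. Under the uniform category prior (consistent with the uniform category sizes of the MISES mechanism class), $H(C)=\log_2 K$, giving $I(T;R)\le\log_2 K$. The same bound in fact holds for every prior; uniformity simply makes it the exact value of $H(C)$.
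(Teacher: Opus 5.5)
Your proposal is correct and follows the paper's proof: the data-processing inequality gives $I(T;R)\le I(T;C)$, the identity $I(T;C)=H(C)-H(C\mid T)\le H(C)$ gives the second inequality, and $H(C)=\log_2 K$ under the uniform prior. Your additions refine that same argument rather than change the route: you derive the DPI from the chain rule, you handle non-discrete $T$ by quantisation, and you observe that $H(C)\le\log_2 K$ holds for every prior, so uniformity is not actually needed for the final bound.
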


\begin{proof}
The first inequality is the data-processing inequality
\cite{cover2006}. Since $T \to C \to R$ is Markov, no processing of $C$
can increase the mutual information between $T$ and the output, giving
$I(T;\, R) \leq I(T;\, C)$. The second inequality is the entropy bound
on mutual information,
$I(T;\, C) = H(C) - H(C \mid T) \leq H(C)$. For a uniform prior over
$K$ categories, $H(C) = \log_2 K$.
\end{proof}

The ceiling is architectural. It is determined by the number of
categories and their prior distribution, not by any additional privacy
mechanism. Reducing semantic recoverability at fixed $K$, for example by
using demand-derived rather than semantic partitions, can reduce
$I(T;\, C)$ below $H(C)$ and tighten the effective leakage. But $H(C)$
is the hard ceiling regardless of partition design.

The bound does not claim that MISES achieves minimum leakage among all
possible mechanisms for the coordination task. A mechanism that adds
noise to $R$ or uses fewer than $K$ effective categories could achieve a
tighter bound. The result is that MISES leaks at most $\log_2 K$ bits
about traffic content, not that $\log_2 K$ is the best achievable.

\begin{remark}[Comparison with Differential Privacy]
Differential privacy provides an $(\varepsilon, \delta)$-guarantee on
the output distribution's sensitivity to any single record. The leakage
bound in \cref{thm:dpi} provides a mutual-information ceiling on the
total information the output reveals about the input. These are
complementary notions. DP bounds worst-case distinguishability; the
data-processing bound limits aggregate information flow. Neither subsumes
the other. The practical distinction is that the data-processing bound
holds by architectural
construction without noise injection, while DP requires a randomised
mechanism that may degrade utility.
\end{remark}

% -----------------------------------------------------------------------
\subsection{Privacy Definitions}
\label{sec:privdefs}
% -----------------------------------------------------------------------

\begin{definition}[Intent-Traffic Unlinkability]\label{def:unlinkability}
Let $T$ denote the traffic-content variable, $C$ the category signal,
and $R = \phi(C)$ the provisioned resource decision. The architecture
is \emph{category-limited unlinkable} if
\begin{equation}\label{eq:unlinkability}
  I(T;\, R) \leq I(T;\, C) \leq H(C).
\end{equation}
\end{definition}

Category-limited unlinkability states that the resource allocation
reveals no more about traffic content than is already present in the
category signal, and the category signal itself carries at most $H(C)$
bits. When the category prior is uniform over $K$ categories, the
ceiling reduces to $\log_2 K$ bits. Reducing $K$ tightens the ceiling
but at the cost of provisioning accuracy (Theorem~1 in \cite{mises_preprint}).

In the intent management context, $T$ captures the actual service
behaviour (application mix, traffic patterns, user activity), $C$ is
the intent category declared during the translation stage, and $R$ is
the resource configuration provisioned during fulfilment. Unlinkability
means that an observer of the provisioned configuration cannot
reconstruct the underlying traffic behaviour beyond what the intent
category already reveals.

\begin{definition}[Node-Opaque Verification]\label{def:node_opacity}
A verification mechanism is \emph{node-opaque} if its decision statistic
is a measurable function only of aggregate, standardised PM counters,
and does not require access to:
\begin{enumerate}[nosep,label=(\alph*)]
\item per-flow payloads or deep packet inspection,
\item per-agent traces or individual user metrics, or
\item vendor-internal node state or proprietary telemetry.
\end{enumerate}
\end{definition}

Node-opacity is a property of the verification interface, not a claim
about what data exists elsewhere in the system. The coordinator may have
access to per-agent data through other channels. The definition
constrains the verification mechanism so that its decision does not
depend on that data.

In the O-RAN context, node-opacity means the assurance function operates
entirely on O1 bulk PM data, without requiring E2-level per-UE telemetry
or vendor-specific near-RT RIC data. This makes the assurance mechanism
deployable across heterogeneous multi-vendor O-RAN deployments where E2
exposure is inconsistent.

\begin{proposition}[Node-Opacity Without Detection Penalty]
\label{prop:node_opacity}
Under Condition~C of \cite{mises_preprint} and fixed $(\alpha, \delta_k, n, m)$,
the node-opaque verifier weakly dominates the per-agent verifier in
detection power:
\begin{equation}\label{eq:node_opacity}
  \bagg(\alpha, \delta_k, n)
    \geq \bflow(\alpha, \delta_k, n, m).
\end{equation}
\end{proposition}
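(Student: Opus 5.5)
The plan is to reduce the proposition to Theorem~2 of \cite{mises_preprint} (aggregate detection dominance), and then to show that the aggregate verifier appearing there is node-opaque in the sense of \cref{def:node_opacity}. The inequality \eqref{eq:node_opacity} is then inherited directly. To make the argument self-contained for the reader, I would also sketch why the dominance holds inside the MISES mechanism class, that is, under uniform category sizes, sample-mean aggregation and Gaussian noise.

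\textbf{Step 1: node-opacity of the aggregate test.} The decision statistic is $\bar{\bm{X}} = (1/n)\sum_i \bm{X}_i$, where the $\bm{X}_i$ are the cell- or category-level O1 PM counters of TS~28.552. The rejection region $\{\bar{\bm{X}} \in A_\alpha\}$ is a Borel set, so the decision is a measurable function of aggregate standardised counters only. None of items (a)--(c) of \cref{def:node_opacity} enters the statistic. The per-agent test uses $\bar{y}_i$ and so violates (b); the two verifiers are therefore genuinely distinct, and the comparison is meaningful.

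\textbf{Step 2: power comparison.} Under the Gaussian model with known covariance, the aggregate test at level $\alpha$ has power
\[
  \bagg = \Phi\!\left(\sqrt{n}\,\|\delta_k\|_{\Sigma^{-1}} - z_{1-\alpha}\right)
\]
in the one-sided projected form. The per-agent statistic is built from $\bar{y}_i = X_i + \bar{\xi}_i$. The noise $\bar{\xi}_i$ is task-orthogonal: it is independent of the hypothesis and carries no information about $\delta_k$. Two consequences follow:
\begin{itemize}
\item the mean shift is unchanged, at $\delta_k$;
\item the variance inflates to $\Sigma + \Sigma_\xi/m$, or a similar term depending on how the $m$ agents are pooled.
\end{itemize}
Hence
\[
  \bflow = \Phi\!\left(\sqrt{n}\,\|\delta_k\|_{(\Sigma+\Sigma_\xi/m)^{-1}} - z_{1-\alpha}\right).
\]
Since $\Sigma+\Sigma_\xi/m \succeq \Sigma$, we have $(\Sigma+\Sigma_\xi/m)^{-1} \preceq \Sigma^{-1}$ in Loewner order. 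Monotonicity of $\Phi$ then gives $\bagg \ge \bflow$, with equality exactly when $\Sigma_\xi\delta_k$-directions vanish.

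\textbf{Step 3: the role of Condition~C.} Condition~C of \cite{mises_preprint} is the homogeneity condition. It ensures that the per-agent metrics share the common mean shift $\delta_k$ and are exchangeable within a category. Without it, a per-agent test could exploit heterogeneous agent-level shifts concentrated on a few agents and beat the averaged statistic. Condition~C rules this out, so the per-agent data is a garbled version of the aggregate, plus independent noise. A cleaner route is then available: view $\bar y$ as a Blackwell garbling of $\bar{\bm X}$ and invoke the Neyman--Pearson lemma, so that no level-$\alpha$ test on the garbled data exceeds the most powerful test on $\bar{\bm X}$.

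\textbf{Main obstacle.} The delicate step is Step~3. The sufficiency/garbling argument requires that $\bar{\bm X}$ is sufficient, or at least Blackwell-more-informative, for the mean-shift family under Condition~C. It also requires that the aggregate sample-mean test is the uniformly most powerful test, or matches the optimal one, in the relevant direction. For a vector $\delta_k$ of unknown direction, UMP fails. I would first restrict to the fixed-$\delta_k$ simple alternative, as the statement's ``fixed $(\alpha,\delta_k,n,m)$'' permits, where Neyman--Pearson applies cleanly.
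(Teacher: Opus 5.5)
Your proposal is correct and takes essentially the paper's route. The paper's proof invokes Theorem~2 of \cite{mises_preprint}, notes that $\bar{\bm X}$ is sufficient, and observes that $\bar y_i = X_i + \bar\xi_i$ only adds task-orthogonal noise (which the paper's Limitations section identifies as Condition~C itself, so the garbling structure you worry about in Step~3 is supplied by the model), and your Loewner-order comparison with $\Sigma+\Sigma_\xi/m$ and the garbling/Neyman--Pearson remark simply make that claim and its $O(1/m)$ gap explicit. One small correction to a side remark: the two powers coincide iff $\Sigma_\xi\Sigma^{-1}\delta_k = 0$, not $\Sigma_\xi\delta_k = 0$, though this does not affect the weak inequality.
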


\begin{proof}
This follows from Theorem~2 in \cite{mises_preprint}. The
aggregate test statistic $\bar{\bm{X}}$ is a sufficient statistic for
the fulfilment state. Per-agent observations
$\bar{y}_{i} = X_i + \bar{\xi}_i$ add task-orthogonal noise. The
node-opaque verifier uses $\bar{\bm{X}}$ directly, while the per-agent
verifier uses the noisier $\bar{y}$. The power gap is
$O(1/m)$, increasing with the number of agents per category.
\end{proof}

Node-opacity is therefore a security property, not merely a
data-collection preference. The verification interface does not create an
attack surface that depends on per-flow inspection or vendor-internal
state. An adversary who compromises the verification interface learns
only aggregate statistics, bounded by the same $H(C)$ ceiling as the
provisioning interface.

% -----------------------------------------------------------------------
\subsection{Structural Alignment of Privacy and Detection}
\label{sec:tension}
% -----------------------------------------------------------------------

The category signal $C$ serves two epistemically distinct purposes:
provisioning (allocating resources to match demand) and verification
(detecting whether the allocation succeeded). These two purposes impose
structurally opposed demands on signal granularity.

The MISES framework \cite{mises_preprint} establishes that provisioning
improves with $K$: finer categories reduce within-category variance
$\varepsilon(K)$ and tighten the welfare bound. Detection worsens
with $K$: finer categories reduce the number of agents per aggregate
to $M/K$, increasing
$\mathrm{Var}(\bm{X}_i) = \sigma^2_{\mathrm{temporal}} +
\sigma^2_{\mathrm{indiv}} \cdot K/M$ and degrading detection power. No
value of $K$ simultaneously optimises both objectives. The feasibility
band $[K_{\min}, K_{\max}]$ constrains $K$ between a welfare lower bound
$K_{\min}(\varepsilon^*)$ and a detection upper bound
$K_{\max}(\beta^*)$. The target pair $(\varepsilon^*, \beta^*)$ is
achievable if and only if $K_{\min} \leq K_{\max}$.

The leakage bound (\cref{thm:dpi}) adds a third axis to this tradeoff.
Since $I(T;\, R) \leq \log_2 K$, privacy and detection are aligned:
both favour lower $K$. Welfare opposes both. The designer who satisfies
the detection constraint $K \leq K_{\max}$ automatically caps leakage at
$\log_2 K_{\max}$ bits. The privacy ceiling is therefore a side effect
of the detection constraint, not a separate design parameter.

\begin{corollary}[Detection-Capped Leakage]\label{cor:capped}
If the designer chooses $K \leq K_{\max}(\beta^*)$ to satisfy the
detection constraint, then the leakage is bounded by
\begin{equation}
  I(T;\, R) \leq \log_2 K_{\max}(\beta^*).
\end{equation}
No separate privacy budget or privacy mechanism is required.
\end{corollary}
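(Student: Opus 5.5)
The plan is to chain \cref{thm:dpi} with the elementary bound $H(C) \le \log_2 |\calK|$ and then use monotonicity of $\log_2$. First, I apply \cref{thm:dpi} to the Markov chain~\eqref{eq:markov}. This gives $I(T;\,R) \le I(T;\,C) \le H(C)$. The chain holds for every choice of $K$, since the architecture enforces it regardless of granularity.

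Next, I bound $H(C)$ without assuming a uniform prior. The variable $C$ takes values in $\calK = \{1,\ldots,K\}$. Jensen's inequality, equivalently the maximum-entropy property of the uniform distribution on a finite alphabet \cite{cover2006}, gives $H(C) \le \log_2 K$ for any category prior, with equality in the uniform case. This step removes the uniformity hypothesis in the second clause of \cref{thm:dpi}. That matters because the corollary is stated for an arbitrary design satisfying $K \le K_{\max}(\beta^*)$, where the induced prior need not be uniform. If categories are empty, the effective alphabet only shrinks, so the bound still holds.

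Finally, the designer's choice $K \le K_{\max}(\beta^*)$ and the monotonicity of $\log_2$ on the positive integers give
\begin{equation*}
I(T;\,R) \le H(C) \le \log_2 K \le \log_2 K_{\max}(\beta^*).
\end{equation*}
The clause ``no separate privacy budget or privacy mechanism is required'' is then interpretive. The bound uses only the Markov structure and the detection constraint, and no noise injection or extra parameter enters. I will state this explicitly rather than prove it.

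The main obstacle is a well-posedness issue rather than the inequalities. $K_{\max}(\beta^*)$ must be a well-defined, finite upper limit: the largest $K$ with $\bagg \ge \beta^*$. This requires detection power under the MISES mechanism class to be non-increasing in $K$, which holds because $\mathrm{Var}(\bm{X}_i) = \sigma^2_{\mathrm{temporal}} + \sigma^2_{\mathrm{indiv}} K/M$ increases in $K$ under Gaussian noise. I would cite Theorem~3 of \cite{mises_preprint} for this, noting that if $K_{\max}$ is real-valued, the bound still holds with $\lfloor K_{\max}\rfloor \le K_{\max}$.
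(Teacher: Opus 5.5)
Your proposal is correct and is essentially the paper's argument: \cref{thm:dpi} gives $I(T;\,R)\le H(C)\le\log_2 K$, and $K\le K_{\max}(\beta^*)$ plus monotonicity of $\log_2$ finishes it. Your explicit maximum-entropy step $H(C)\le\log_2 K$ for an arbitrary prior is a worthwhile tightening, since the paper silently uses $I(T;\,R)\le\log_2 K$ although \cref{thm:dpi} states it only for a uniform prior; by contrast, the well-posedness discussion of $K_{\max}(\beta^*)$ is unnecessary for the inequality (any positive number $K_{\max}\ge K$ suffices), so you need not appeal to monotone detection power or Theorem~3 of \cite{mises_preprint}.
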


This alignment is structural. It does not depend on the demand
distribution or the semantic content of the categories, and it holds for
any mechanism in the MISES class. The choice of $K$ within the
feasibility band cannot be derived from the formal structure alone. The
provisioner wants high $K$ for tight categories and low $\varepsilon$.
The verifier wants low $K$ for large aggregates, high detection power,
and low leakage. The MISES architecture characterises this conflict
rather than resolving it. The choice is a value judgement, not an
optimisation problem with a unique solution.

% =======================================================================
\section{Deployment in the 3GPP Intent Lifecycle}
\label{sec:deployment}
% =======================================================================

This section maps the MISES architecture to the concrete interfaces and
data objects of the 3GPP intent management framework and the O-RAN
architecture. The mapping identifies where each component of the
mechanism operates in a production 6G deployment.

% -----------------------------------------------------------------------
\subsection{Intent Lifecycle Mapping}
\label{sec:deployment:lifecycle}
% -----------------------------------------------------------------------

The 3GPP intent lifecycle (TS~28.312 \cite{3gpp_28312}) defines four
stages. The MISES mechanism maps to each as follows.

\textbf{Stage 1: Intent Creation.}
The operator creates an intent object specifying service expectations
(coverage, throughput, availability, latency targets). No modification
to this stage is required. The intent object is the input to the
management system.

\textbf{Stage 2: Intent Translation.}
The management system translates the intent into network-level
objectives. In the MISES architecture, translation includes category
assignment: each intent (or group of intents with similar resource
profiles) is mapped to one of $K$ intent categories. The category
assignment is the demand-derived partition from \cite{mises_preprint}, constructed from the declared resource
requirements, not from traffic content. This is the point at which the
information restriction is established. The translator outputs a
category label $c \in \{1, \ldots, K\}$ and the corresponding resource
profile $\phi(c)$.

The category construction can use historical demand distributions
without per-flow traffic data. Standard clustering on declared intent
parameters (requested throughput, latency bound, availability target)
produces the partition. The number of categories $K$ is a system design
parameter chosen within the feasibility band $[K_{\min}, K_{\max}]$.

\textbf{Stage 3: Intent Fulfilment.}
The management system provisions resources according to $\phi(c)$.
This is Mechanism~A (blind intent fulfilment) from the invention
disclosure context. The coordinator allocates resources based solely
on the category label. No traffic inspection, session correlation, or
per-user state is required during provisioning. The fulfilment
function $\phi$ maps each category to a resource configuration (PRB
allocation profile, scheduling priority, QoS flow configuration) using
standard 3GPP provisioning interfaces.

\textbf{Stage 4: Intent Assurance.}
The management system continuously monitors whether the provisioned
resources satisfy the expressed intent. This is Mechanism~B (node-opaque
outcome verification). The assurance function observes only aggregate
PM counters from the O1 interface. For each intent category, the
assurance function computes the aggregate test statistic from
cell-level counters and compares it against the fulfilment threshold
calibrated during deployment.

The assurance loop is closed: if the aggregate test detects degradation
in a category, the management system can trigger re-provisioning without
needing to identify which specific intents or users are affected.

% -----------------------------------------------------------------------
\subsection{O-RAN Deployment Architecture}
\label{sec:deployment:oran}
% -----------------------------------------------------------------------

In the O-RAN architecture, the MISES mechanism deploys as a Non-RT RIC
rApp (or equivalently as a 3GPP MnS Producer). The deployment uses two
standard interfaces:

\textbf{O1 Interface (PM Data).}
The O1 interface provides bulk PM data files containing standardised
counters as defined in TS~28.552 \cite{3gpp_28552}. These include:
\begin{itemize}[nosep]
\item Cell-level DL/UL throughput and PRB utilisation
\item HARQ ACK/NACK ratios and modulation order distributions
\item RRC connection statistics
\item Active UE counts per cell
\item DL/UL PDCP SDU volumes
\end{itemize}
The assurance function (Mechanism~B) operates exclusively on these
counters. No per-UE, per-bearer, or per-flow data crosses the O1
interface in the standard PM data model.

\textbf{A1 Interface (Policy).}
The A1 interface carries policy directives from the Non-RT RIC to the
near-RT RIC. In the MISES deployment, A1 policies encode the
category-to-resource mapping $\phi(c)$: for each intent category, the
policy specifies the target resource configuration. Updates to $\phi$
(e.g.\ when the feasibility band shifts due to load changes) are
propagated as A1 policy updates.

\textbf{No E2 Dependency.}
The mechanism does not require E2 interface data (per-UE metrics,
per-bearer reports). This is the key deployment advantage. The E2
interface provides near-real-time granular telemetry, but its
availability varies across vendors. Some O-RAN vendors expose rich E2
data; others expose minimal subsets or none at all. By operating
exclusively on O1 aggregate data, the mechanism is deployable in any
O-RAN configuration without vendor-specific integration.

% -----------------------------------------------------------------------
\subsection{Multi-Vendor and Multi-Tenant Considerations}
\label{sec:deployment:multiv}
% -----------------------------------------------------------------------

The node-opacity property (\cref{def:node_opacity}) has direct
operational consequences in multi-vendor deployments.

\textbf{Vendor heterogeneity.}
In a deployment with CU from vendor~A, DU from vendor~B, and RU from
vendor~C, the assurance function does not require any vendor to expose
internal processing state. The O1 PM counters defined in TS~28.552 are
standardised and vendor-agnostic. The assurance decision depends on
aggregate cell-level metrics that all compliant vendors must report.

\textbf{Multi-tenant isolation.}
In network slicing scenarios, each tenant's intents map to separate
intent categories. The assurance function monitors each category's
aggregate counters independently. No cross-tenant per-flow correlation
is required. The leakage bound (\cref{thm:dpi}) applies per-category:
tenant~A's assurance process reveals at most $\log_2 K_A$ bits about
tenant~A's traffic, where $K_A$ is the number of intent categories
allocated to tenant~A.

\textbf{Regulatory alignment.}
The architecture satisfies data minimisation (GDPR Article~5(1)(c)) by
construction: the management system does not collect per-flow or
per-user data for intent fulfilment or assurance. For deployments subject
to the ePrivacy Directive, the architecture avoids interception of
communications content entirely, as the assurance function observes
only aggregate network performance, not user-plane data.

% -----------------------------------------------------------------------
\subsection{Implementation Considerations}
\label{sec:deployment:impl}
% -----------------------------------------------------------------------

\textbf{Category construction.}
The intent categories are constructed offline from historical declared
intent parameters using standard clustering (e.g.\ $k$-means on the
requested-QoS vector). The partition is updated periodically (e.g.\
daily or weekly) as the intent population evolves. The number of
categories $K$ is chosen within the feasibility band based on the
operator's welfare-detection preference.

\textbf{Threshold calibration.}
The fulfilment detection threshold for each category is calibrated using
a validation period of aggregate PM data under known-good conditions.
The calibration sets the distance-to-centroid threshold at a target
false positive rate $\alpha_0$. This calibration uses only aggregate
data and does not require labelled per-flow ground truth.

\textbf{Assurance loop timing.}
The assurance loop runs at the granularity of the O1 PM reporting
period, typically 15 minutes for bulk PM files or sub-minute when O1
streaming PM is available. The mechanism is agnostic to the reporting
rate as it requires only aggregate counters per period. By contrast,
DPI-based and per-flow approaches must process traffic continuously at
line rate, and differential privacy mechanisms add per-query noise
calibration overhead. The MISES assurance loop has no per-packet or
per-query processing cost.

\textbf{Graceful degradation.}
If the assurance function detects degradation in a category, the
response is a category-level action (re-provisioning, policy update
via A1) rather than a per-user action. This preserves node-opacity
through the remediation stage: the management system acts on categories,
not individual agents.

% =======================================================================
\section{Empirical Illustration}
\label{sec:empirical}
% =======================================================================

We evaluate the architecture on five weeks of PM counter data
(2025-W45 through 2025-W49) from four anonymised production operator
networks. \Cref{tab:networks} summarises the datasets.

\begin{table}[t]
\centering
\caption{Production network PM datasets.}
\label{tab:networks}
\begin{tabular}{lrrr}
\toprule
Network & Cells & Cell-hours & Features \\
\midrule
Net-A & 537 & 431{,}603 & 11 \\
Net-B & 4{,}987 & 4{,}017{,}762 & 11 \\
Net-C & 12{,}861 & 10{,}624{,}114 & 11 \\
Net-D & 6{,}136 & 5{,}061{,}561 & 11 \\
\bottomrule
\end{tabular}
\end{table}

Each cell contributes one aggregate metric vector per reporting period,
covering throughput, resource utilisation, HARQ modulation profile,
latency proxy, and active UE count (eleven aggregate features across
four resource dimensions). No per-agent or per-flow data is used at
any stage. The PM counters correspond to the standardised TS~28.552
measurements described in \cref{sec:deployment:oran}.

\subsection{Instantiation and Ground Truth}

Intent categories are constructed by $k$-means clustering on per-cell
mean PM profiles from the training period (first 60\% of hours,
temporal split). The validation set (next 20\%) calibrates the
distance-to-centroid threshold at a target false positive rate
$\alpha_0 = 0.20$. The test set uses injection-based ground truth:
a fraction $\rho = 0.30$ of test-set cell-hours have their PM vectors
replaced by vectors drawn from a different cluster's training reservoir.
The ground truth label is set before the detection test runs and is not
derived from any distance computation.

This instantiates the deployment model from \cref{sec:deployment}.
Categories correspond to intent classes. The aggregate test statistic
is the mean PM vector per category. The threshold calibration on the
held-out period mirrors the deployment calibration described in
\cref{sec:deployment:impl}. The injection-based degradation represents
a scenario where a subset of cells in a category experience a service
quality shift, which is the event the assurance function must detect.

\subsection{Detection Plateau}
\label{sec:empirical:plateau}

\Cref{fig:sweep} shows detection recall across
$K \in \{3, 5, 8, 10, 15, 20, 25, 30\}$ at fixed FPR~$= 0.20$.
Three regimes are visible.

\begin{figure}[t]
\centering
\includegraphics[width=0.78\columnwidth]{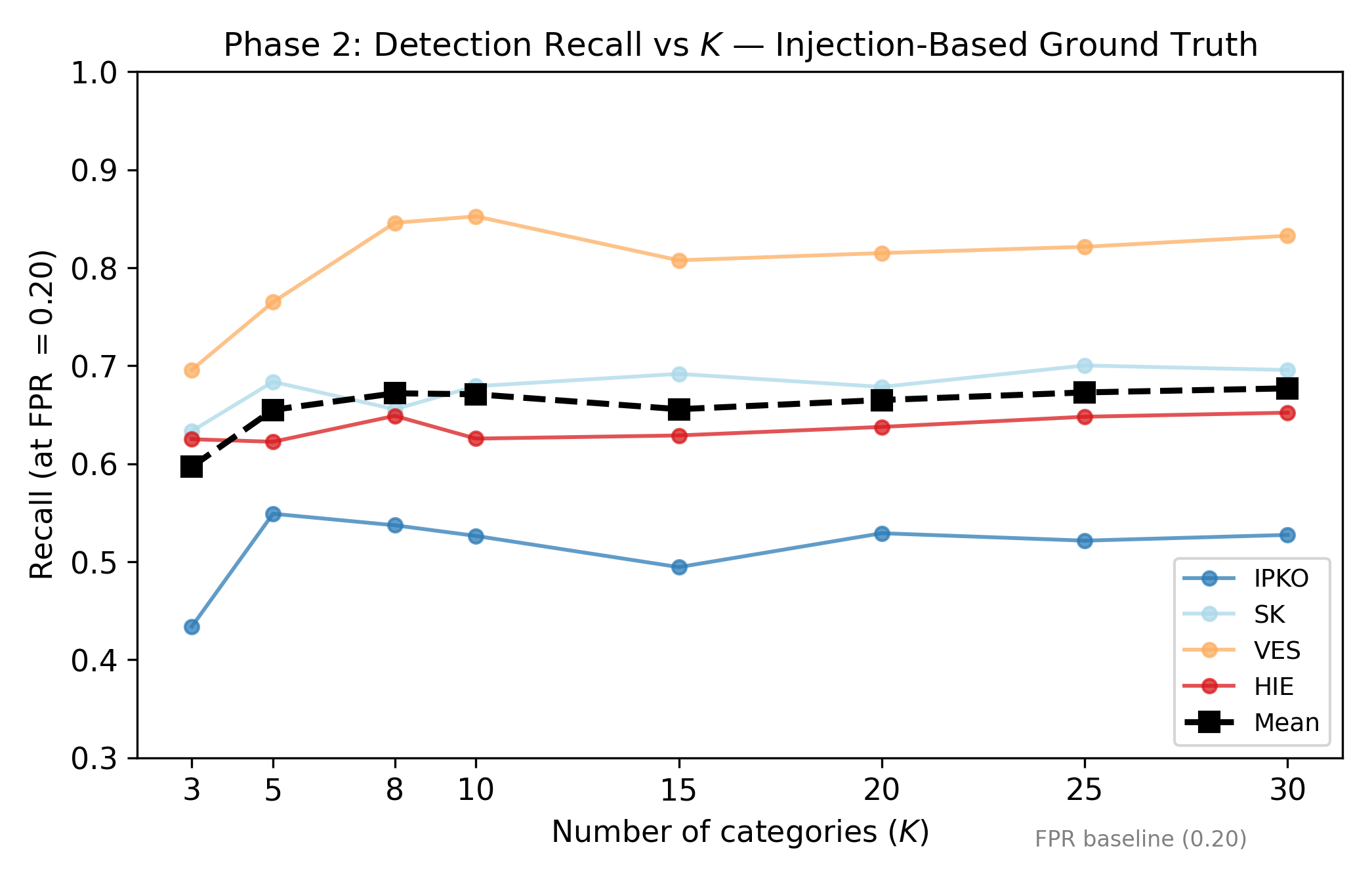}
\caption{Detection recall vs.\ $K$ at FPR~$= 0.20$ across four
         production networks. In all four networks recall plateaus
         while the leakage ceiling $\log_2 K$ continues to increase.}
\label{fig:sweep}
\end{figure}

At low $K$ ($K = 3$), categories are too coarse for meaningful null
distributions and recall is weakest (0.44--0.70 across networks).
At moderate $K$ ($K = 5$--$8$), tighter categories improve
discrimination: Net-C reaches 0.85 recall at $K = 8$.
At high $K$ ($K \geq 10$), recall plateaus as the tightening gain is
offset by loss of inter-cluster separation.

The plateau is consistent with the dual-purpose tension predicted by the
theory.
In these networks, beyond $K \approx 8$ the signal is serving the
provisioner while the verifier gains nothing further.
The leakage ceiling $\log_2 K$ continues to grow with $K$: from 3.0 bits
at $K = 8$ to 4.9 bits at $K = 30$. The designer who increases $K$ past
the detection plateau is paying a privacy cost for a provisioning gain,
with no detection benefit.

\subsection{Detection Scaling with Network Size}
\label{sec:empirical:scaling}

The detection plateau appears consistently across all four networks
despite an order-of-magnitude range in network size (537 to 12{,}861
cells). Larger networks exhibit higher absolute recall: Net-C (12{,}861
cells, 1{,}286 cells per category at $K = 10$) reaches 0.85, while
Net-A (537 cells, 54 per category at $K = 10$) reaches 0.53. This is
consistent with the $O(1/m)$ power gap from Theorem~2 of \cite{mises_preprint},
where $m$ is the number of cells per category.

\Cref{tab:scaling} shows how recall varies with the cells-per-category
ratio $M/K$ across networks and cluster counts. The relationship is
monotone: at every $K$, the network with more cells per category
achieves higher recall.

\begin{table}[t]
\centering
\caption{Detection recall by cells-per-category ($M/K$) at
         FPR~$= 0.20$.}
\label{tab:scaling}
\begin{tabular}{rrrrrr}
\toprule
 & \multicolumn{2}{c}{Net-A (537)} & \phantom{a} &
   \multicolumn{2}{c}{Net-C (12{,}861)} \\
\cmidrule{2-3} \cmidrule{5-6}
$K$ & $M/K$ & Recall && $M/K$ & Recall \\
\midrule
3  & 179  & 0.44 && 4{,}287 & 0.70 \\
8  &  67  & 0.54 && 1{,}608 & 0.85 \\
15 &  36  & 0.50 && 857     & 0.81 \\
30 &  18  & 0.52 && 429     & 0.83 \\
\bottomrule
\end{tabular}
\end{table}

Category sizes are highly imbalanced at higher $K$: at $K = 30$, the
largest category in Net-C contains 1{,}498 cells while the smallest
contains 1. Singleton categories offer no statistical basis for
aggregate detection. In deployment, a minimum category size constraint
(e.g.\ $m_{\min} \geq 10$) should be enforced during category
construction.

Theorem~2's dominance result (aggregate detection weakly dominates
per-agent) holds at every $K$ tested and does not depend on monotone
improvement of detection with $K$. The dominance is an
information-theoretic property of the sufficient statistic, not a claim
about the response of detection power to granularity.

\subsection{Category Construction Cost}
\label{sec:empirical:cost}

\Cref{tab:cost} reports the wall-clock time for category construction
($k$-means fit on per-cell mean profiles, averaged over 5 runs) and
the database query time for aggregating hourly PM vectors from the raw
data.

\begin{table}[t]
\centering
\caption{Category construction cost (seconds). Query: DuckDB
         aggregation of hourly PM vectors. Fit: $k$-means on per-cell
         profiles ($n_{\mathrm{init}} = 10$, 11 features).}
\label{tab:cost}
\begin{tabular}{lrrrr}
\toprule
 & Net-A & Net-B & Net-C & Net-D \\
 & (537) & (4{,}987) & (12{,}861) & (6{,}136) \\
\midrule
Query (s)        & 2.7 & 39.9 & 60.4 & 32.3 \\
Fit $K\!=\!8$ (s)  & 0.07 & 0.21 & 1.40 & 0.21 \\
Fit $K\!=\!15$ (s) & 0.09 & 0.32 & 2.20 & 0.40 \\
Fit $K\!=\!30$ (s) & 0.13 & 0.42 & 2.36 & 1.03 \\
\bottomrule
\end{tabular}
\end{table}

The $k$-means fit is sub-second for networks up to $\sim$5{,}000 cells
and under 2.5 seconds even for the largest network (12{,}861 cells) at
$K = 30$. The dominant cost is the database query, which aggregates
millions of raw PM records into hourly per-cell vectors. In a
production deployment, the aggregation would be performed incrementally
by the O1 PM pipeline, eliminating this cost entirely. The fit itself
is negligible, confirming that category recomputation can be performed
frequently (hourly or daily) without operational burden.

\subsection{Temporal Robustness}
\label{sec:empirical:staleness}

To assess how quickly categories become stale, we train the category
structure on week~1 (2025-W45) and evaluate detection performance on
each subsequent week without retraining. \Cref{tab:staleness} reports
recall and mean centroid drift (Euclidean distance between the
week-1 centroids and the centroids that would be computed from each
evaluation week's data) at $K = 15$.

\begin{table}[t]
\centering
\caption{Temporal staleness at $K = 15$: recall and centroid drift
         by weeks since training.}
\label{tab:staleness}
\begin{tabular}{lcccccc}
\toprule
 & \multicolumn{2}{c}{+1 week} & \multicolumn{2}{c}{+2 weeks}
 & \multicolumn{2}{c}{+4 weeks} \\
\cmidrule(lr){2-3} \cmidrule(lr){4-5} \cmidrule(lr){6-7}
Network & Rec. & Drift & Rec. & Drift & Rec. & Drift \\
\midrule
Net-A & 0.47 & 1.1 & 0.47 & 1.3 & 0.47 & 1.4 \\
Net-B & 0.73 & 4.2 & 0.72 & 4.8 & 0.52 & 5.6 \\
Net-C & 0.82 & 0.4 & 0.80 & 4.4 & 0.75 & 7.6 \\
Net-D & 0.60 & 0.9 & 0.58 & 1.2 & 0.56 & 1.4 \\
\bottomrule
\end{tabular}
\end{table}

Three of four networks (Net-A, Net-C, Net-D) show gradual degradation,
with recall dropping 7--11\% over four weeks. Net-B exhibits a sharper
decline (32\% over four weeks), with centroid drift rising from 4.2 to
5.6. This suggests a structural shift in the network's traffic profile
around weeks 48--49.

The centroid drift metric tracks recall loss across all networks and
$K$ values, providing a deployment-ready monitoring signal: when drift
exceeds a threshold, the operator recomputes categories. Given that
category construction takes under 3 seconds even for the largest
network (\cref{sec:empirical:cost}), a weekly recomputation schedule
is operationally trivial and sufficient for three of four networks.
Volatile networks may require biweekly recomputation.

\subsection{Data Footprint Comparison}
\label{sec:empirical:footprint}

\Cref{tab:footprint} compares the data requirements of the MISES
architecture against two alternative intent assurance approaches: the
GenAI-IDN architecture of Habib et al.\ \cite{habib2025},
which processes per-flow session records, and an E2-based approach
using per-UE near-real-time RIC telemetry.

\begin{table}[t]
\centering
\caption{Data footprint per cell per reporting period. Volume ratio
         is relative to MISES.}
\label{tab:footprint}
\small
\begin{tabular}{lccc}
\toprule
 & MISES & GenAI-IDN & E2-based \\
\midrule
Granularity    & Cell agg. & Per-flow  & Per-UE \\
Interface      & O1        & Custom    & E2 \\
Records        & 1         & ${\sim}10^3$ & ${\sim}10^2$ \\
Features       & 11        & 12        & 15 \\
Scalars        & 11        & 12{,}000  & 1{,}500 \\
Vol.\ ratio    & $1\times$ & ${\sim}10^3\!\times$ & ${\sim}140\!\times$ \\
Per-UE data    & No        & Yes       & Yes \\
DPIA required  & No        & Yes       & Yes \\
Vendor portable & Yes      & No        & Partial \\
\bottomrule
\end{tabular}
\end{table}

The MISES architecture requires three orders of magnitude less data per
cell per reporting period than the GenAI-IDN approach and two orders
less than E2-based assurance. The reduction is not a compression or
approximation: the architecture never collects the per-flow or per-UE
data in the first place. This elimination, rather than reduction, of
data collection is what produces the privacy guarantee
(\cref{thm:dpi}) and removes the need for a data protection
impact assessment under GDPR.

The O1 interface dependency means the architecture is deployable in any
O-RAN-compliant network regardless of vendor-specific E2 exposure. The
GenAI-IDN approach requires custom flow-level export capabilities that
no current O-RAN specification mandates. The E2 approach depends on
per-UE indication messages whose availability and schema vary across
RAN vendors.

\subsection{Implications for Intent Category Design}
\label{sec:empirical:design}

The results inform the choice of $K$ in the deployment model.
For the networks tested, $K \in [5, 8]$ occupies the region where both
provisioning and detection are reasonable. At $K = 5$, the welfare gap is
moderate and detection recall is near the plateau. At $K = 8$, the
welfare gap is tighter and detection is at the plateau. Beyond $K = 8$,
additional categories improve provisioning but the detection and privacy
costs increase with no assurance benefit.

In the 3GPP intent lifecycle, this translates to a recommendation: the
intent translation stage should map intents to a moderate number of
categories (order 5--10 for networks of these sizes), chosen within the
feasibility band. The exact $K$ is an operator decision reflecting the
relative priority of provisioning precision versus assurance confidence
and privacy.

% =======================================================================
\section{Discussion}
\label{sec:discussion}
% =======================================================================

\subsection{Architectural Privacy vs.\ Mechanism Privacy}

The MISES architecture achieves privacy by not collecting data, rather
than by protecting collected data. This is a structural distinction from
differential privacy, federated learning, and secure computation, all of
which assume a full-information pipeline exists and then limit what can
be inferred from it. The leakage bound (\cref{thm:dpi}) formalises this
distinction: the ceiling on what the resource allocation reveals is set
by the category architecture itself, not by a post-hoc protection
mechanism.

This distinction has practical consequences. Mechanism-based privacy
(DP, FL, HE) requires ongoing operational effort: noise calibration,
gradient clipping, key management. Architectural privacy, once the
information flow is established, is self-sustaining. The privacy guarantee
holds as long as the architecture is not bypassed.

\subsection{Limitations}

The privacy definitions are architectural properties. They characterise
what the mechanism does not observe, not what an adversary with side
information cannot infer. A stronger threat model, for example an
adversary with access to external data sources correlated with $T$,
could weaken the effective privacy even though the mechanism's
information flow is unchanged. The architecture does not protect against
traffic analysis attacks that operate on metadata outside the mechanism's
scope (e.g.\ timing, packet sizes observed at lower layers).

The aggregate dominance result (Theorem~2 in \cite{mises_preprint}) requires
Condition~C (task-orthogonal per-agent noise). In deployments where
per-agent noise is correlated with the fulfilment state, the dominance
may not hold and per-agent verification could be preferable. The
condition is checkable from data.

The empirical illustration uses injection-based ground truth, not
naturally occurring degradation. The detection plateau is consistent with
the theoretical prediction of opposing monotonicity, but the
illustration is not a proof of the theorem.

% =======================================================================
\section{Conclusion}
\label{sec:conclusion}
% =======================================================================

We presented an architecture for privacy-preserving intent fulfilment
and assurance in 6G RAN that operates entirely on declared intent
categories and aggregate standardised PM counters. The architecture
reveals at most $\log_2 K$ bits about traffic content, a ceiling set by
the information flow rather than by a post-hoc protection mechanism.
Intent-traffic unlinkability and node-opaque verification hold by
construction, and node-opacity does not sacrifice detection power under
the stated conditions.

The central structural finding is that detection and privacy are aligned:
both favour fewer intent categories, while provisioning quality opposes
both. The designer who satisfies the detection constraint automatically
caps information leakage, eliminating the need for a separate privacy
budget. Production data from four operator networks illustrate this
tradeoff concretely: beyond moderate category granularity, provisioning
continues to improve while detection plateaus, making the privacy cost
of finer categories visible as an engineering decision.

The deployment mapping to the 3GPP intent lifecycle and O-RAN Non-RT RIC
identifies the concrete interfaces, data objects, and deployment points
at which the architecture operates. The mechanism requires only O1
aggregate PM data, making it deployable in any O-RAN configuration
regardless of vendor-specific E2 exposure. The architecture satisfies
GDPR data minimisation and ePrivacy requirements by construction, as the
management system never collects the data that would require protection.

% -----------------------------------------------------------------------
\bibliographystyle{plainnat}
\bibliography{refs}

@misc{mises_preprint,
  author       = {Armstrong, Joss},
  title        = {{MISES}: Minimal Information Sufficiency for Effective Service},
  year         = {2026},
  note         = {Zenodo preprint, \url{https://doi.org/10.5281/zenodo.19819209}},
}

@inproceedings{dwork2006,
  author    = {Dwork, Cynthia},
  title     = {Differential Privacy},
  booktitle = {Proc. 33rd International Colloquium on Automata, Languages and Programming (ICALP)},
  pages     = {1--12},
  year      = {2006},
  publisher = {Springer},
  address   = {Venice, Italy},
}

@inproceedings{mcmahan2017,
  author    = {McMahan, Brendan and Moore, Eider and Ramage, Daniel
               and Hampson, Seth and Arcas, Blaise Ag\"{u}era y},
  title     = {Communication-Efficient Learning of Deep Networks
               from Decentralized Data},
  booktitle = {Proc. 20th International Conference on Artificial
               Intelligence and Statistics (AISTATS)},
  pages     = {1273--1282},
  year      = {2017},
}

@phdthesis{gentry2009,
  author = {Gentry, Craig},
  title  = {A Fully Homomorphic Encryption Scheme},
  school = {Stanford University},
  year   = {2009},
}

@misc{cavoukian2009,
  author = {Cavoukian, Ann},
  title  = {Privacy by Design: The 7 Foundational Principles},
  year   = {2009},
  note   = {Information and Privacy Commissioner of Ontario},
}

@inproceedings{abadi2016,
  author    = {Abadi, Martin and Chu, Andy and Goodfellow, Ian and
               McMahan, H. Brendan and Mironov, Ilya and Talwar, Kunal
               and Zhang, Li},
  title     = {Deep Learning with Differential Privacy},
  booktitle = {Proc. ACM SIGSAC Conference on Computer and
               Communications Security (CCS)},
  pages     = {308--318},
  year      = {2016},
}

@inproceedings{yao1986,
  author    = {Yao, Andrew Chi-Chih},
  title     = {How to Generate and Exchange Secrets},
  booktitle = {Proc. 27th Annual Symposium on Foundations of
               Computer Science (FOCS)},
  pages     = {162--167},
  year      = {1986},
  publisher = {IEEE},
}

@inproceedings{sirinam2018,
  author    = {Sirinam, Payap and Imani, Mohsen and Juarez, Marc
               and Wright, Matthew},
  title     = {Deep Fingerprinting: Undermining Website Fingerprinting
               Defenses with Deep Learning},
  booktitle = {Proc. ACM SIGSAC Conference on Computer and
               Communications Security (CCS)},
  pages     = {1928--1943},
  year      = {2018},
}

@inproceedings{wang2014,
  author    = {Wang, Tao and Cai, Xiang and Nithyanand, Rishab and
               Johnson, Rob and Goldberg, Ian},
  title     = {Effective Attacks and Provable Defenses for Website
               Fingerprinting},
  booktitle = {Proc. 23rd USENIX Security Symposium},
  pages     = {143--157},
  year      = {2014},
}

@techreport{pfitzmann2010,
  author      = {Pfitzmann, Andreas and Hansen, Marit},
  title       = {A Terminology for Talking about Privacy by Data
                 Minimization: Anonymity, Unlinkability, Undetectability,
                 Unobservability, Pseudonymity, and Identity Management},
  institution = {TU Dresden},
  year        = {2010},
  note        = {v0.34},
}

@techreport{3gpp_28552,
  author      = {{3GPP}},
  title       = {Management and Orchestration; 5G Performance Measurements},
  number      = {TS 28.552 V18.3.0},
  institution = {3rd Generation Partnership Project},
  year        = {2024},
}

@techreport{oran_wg1,
  author      = {{O-RAN Alliance}},
  title       = {O-RAN Architecture Description},
  number      = {O-RAN.WG1.O-RAN-Architecture-Description-v09.00},
  institution = {O-RAN Alliance},
  year        = {2023},
}

@misc{gdpr2016,
  author = {{European Parliament and Council}},
  title  = {Regulation ({EU}) 2016/679 (General Data Protection
            Regulation)},
  year   = {2016},
  note   = {Official Journal of the European Union, L~119},
}

@book{cover2006,
  author    = {Cover, Thomas M. and Thomas, Joy A.},
  title     = {Elements of Information Theory},
  edition   = {2nd},
  publisher = {Wiley-Interscience},
  address   = {Hoboken, NJ},
  year      = {2006},
}

@techreport{3gpp_28312,
  author      = {{3GPP}},
  title       = {Management and Orchestration; Intent Driven Management Services for Mobile Networks},
  number      = {TS 28.312 V18.3.0},
  institution = {3rd Generation Partnership Project},
  year        = {2024},
}

@techreport{3gpp_28912,
  author      = {{3GPP}},
  title       = {Study on Intent Driven Management Services for Mobile Networks},
  number      = {TR 28.912 V18.0.0},
  institution = {3rd Generation Partnership Project},
  year        = {2024},
}

@misc{eprivacy2002,
  author = {{European Parliament and Council}},
  title  = {Directive 2002/58/{EC} (ePrivacy Directive)},
  year   = {2002},
  note   = {Official Journal of the European Communities, L~201},
}

@misc{habib2025,
  author       = {Habib, Md Arafat and others},
  title        = {Generative {AI} for Intent-Driven Network Management
                  in {6G} {RAN}: A Case Study on the {Mamba} Model},
  year         = {2025},
  note         = {arXiv preprint, \url{https://arxiv.org/abs/2508.06616}},
}

@article{wang2026,
  author  = {Wang, Yuntao and others},
  title   = {A Survey on Intent-Driven End-to-End {6G} Mobile Communication System},
  journal = {IEEE Communications Surveys \& Tutorials},
  year    = {2026},
}

@article{leivadeas2023,
  author  = {Leivadeas, Aris and Falkner, Matthias},
  title   = {A Survey on Intent-Based Networking},
  journal = {IEEE Communications Surveys \& Tutorials},
  volume  = {25},
  number  = {1},
  pages   = {625--655},
  year    = {2023},
}

@article{polese2023,
  author  = {Polese, Michele and Bonati, Leonardo and D'Oro, Salvatore and Basagni, Stefano and Melodia, Tommaso},
  title   = {Understanding {O-RAN}: Architecture, Interfaces, Algorithms, Security, and Research Challenges},
  journal = {IEEE Communications Surveys \& Tutorials},
  volume  = {25},
  number  = {2},
  pages   = {1376--1411},
  year    = {2023},
}

@article{sharma2025,
  author  = {Sharma, Adit and Habibi Lashkari, Arash},
  title   = {A Survey on Encrypted Network Traffic: A Comprehensive
             Survey of Identification/Classification Techniques,
             Challenges, and Future Directions},
  journal = {Computer Networks},
  volume  = {257},
  pages   = {110984},
  year    = {2025},
  doi     = {10.1016/j.comnet.2024.110984},
}

@article{blika2024,
  author  = {Blika, Afroditi and Palmos, Stefanos and Doukas, George
             and Lamprou, Vangelis and Pelekis, Sotiris and Kontoulis,
             Michael and Ntanos, Christos and Askounis, Dimitris},
  title   = {Federated Learning For Enhanced Cybersecurity And
             Trustworthiness In {5G} and {6G} Networks: A Comprehensive
             Survey},
  journal = {IEEE Open Journal of the Communications Society},
  volume  = {6},
  pages   = {3094--3130},
  year    = {2025},
  doi     = {10.1109/OJCOMS.2024.3449563},
}

@techreport{tmforum_ig1305,
  author      = {{TM Forum}},
  title       = {Autonomous Networks: Empowering Digital Transformation
                 --- From Strategy to Implementation},
  number      = {IG1305},
  institution = {TM Forum},
  year        = {2023},
}

@techreport{etsi_zsm,
  author      = {{ETSI}},
  title       = {Zero-touch Network and Service Management ({ZSM}); Reference Architecture},
  number      = {GS ZSM 002 V1.1.1},
  institution = {ETSI},
  year        = {2019},
}

\end{document}